\newcommand{\set}[1]{\left\{#1\right\}}
\newcommand{\bP}{\mathbb{P}}
\newcommand{\bI}{\mathbb{I}}
\newcommand{\bZ}{\mathbb{Z}}
\newcommand{\cQ}{\mathcal{Q}}
\newcommand{\Vmax}{V_{\max}}
\DeclareMathOperator{\CH}{CH}
\DeclareMathOperator{\ch}{ch}
\DeclareMathOperator{\OD}{OD}
\DeclareMathOperator{\CA}{CA}
\DeclareMathOperator{\HL}{HL}
\DeclareMathOperator{\Rank}{Rank}
  \tikzset{notestyleraw/.append style={rectangle}}
\def\BibTeX{{\rm B\kern-.05em{\sc i\kern-.025em b}\kern-.08em
    T\kern-.1667em\lower.7ex\hbox{E}\kern-.125emX}}
\author{Ivor van der Hoog}{Technical University of Denmark, Denmark}{idjva@dtu.dk}{https://orcid.org/0009-0006-2624-0231}{}
\author{Eva Rotenberg}{Technical University of Denmark}{erot@itu.dk}{0000-0001-5853-7909 }{}
\author{Daniel Rutschmann}{Technical University of Denmark, Denmark}{daru@dtu.dk}{ https://orcid.org/0009-0005-6838-2628}{}
\begin{document}

\title{A Combinatorial Proof of Universal Optimality for Computing a Planar Convex Hull
}

\authorrunning{Ivor van der Hoog, Eva Rotenberg, and Daniel Rutschmann}

\keywords{Convex hull, Combinatorial proofs, Universal optimality}

\ccsdesc[500]{Theory of computation~Computational geometry}
\ccsdesc[500]{Theory of computation~Design and analysis of algorithms}

\funding{This work was supported by the Carlsberg Foundation Fellowship CF21-0302 ``Graph Algorithms with Geometric Applications'', the VILLUM Foundation grant (VIL37507) ``Efficient Recomputations for Changeful Problems'', and the European Union's Horizon 2020 research and innovation programme under the Marie Sk\l{}odowska-Curie grant agreement No 899987. }

\funding{{\it Ivor van der Hoog}, {\it Eva Rotenberg}, and {\it Daniel Rutschmann} are grateful to the Carlsberg Foundation for supporting this research via Eva Rotenberg's Young Researcher Fellowship CF21-0302 ``Graph Algorithms with Geometric Applications''. This work was supported by the the VILLUM Foundation grant (VIL37507) ``Efficient Recomputations for Changeful Problems'', the Independent Research Fund Denmark grant 2020-2023 (9131-00044B) ``Dynamic Network Analysis'', and the European Union's Horizon 2020 research and innovation programme under the Marie Sk\l{}odowska-Curie grant agreement No 899987. }

\Copyright{Ivor van der Hoog, Eva Rotenberg, and Daniel Rutschmann}

\maketitle

\begin{abstract}
For a planar point set $P$, its convex hull is the smallest convex polygon that encloses all points in $P$. The construction of the convex hull from an array $I_P$ containing $P$ is a fundamental problem in computational geometry. By sorting $I_P$ in lexicographical order, one can construct the convex hull of $P$ in $O(n \log n)$ time which is worst-case optimal.
Standard worst-case analysis, however, has been criticized as overly coarse or pessimistic, and researchers search for more refined analyses.

For an algorithm $A$, worst-case analysis fixes $n$, and considers the maximum running time of $A$ across all size-$n$ point sets $P$ and permutations $I_P$ of $P$.
Output-sensitive analysis fixes $n$ and $k$, and considers the maximum running time across all size-$n$ points sets $P$ with $k$ hull points and permutations $I_P$ of $P$.
Universal analysis provides an even stronger guarantee. It fixes a point set $P$ and considers the maximum running time across all permutations $I_P$ of $P$. 

Kirkpatrick, McQueen, and Seidel [SICOMP'86] consider output-sensitive analysis. If the convex hull of $P$ contains $k$ points, then their algorithm runs in $O(n \log k)$ time.
Afshani, Barbay, Chan [FOCS'07] prove that the algorithm by Kirkpatrick, McQueen, and Seidel is also universally optimal. 
Their proof restricts the model of computation to any algebraic decision tree model where the test functions have at most constant degree and at most a constant number of arguments.
They rely upon involved algebraic arguments to construct a lower bound for each point set $P$ that matches the universal running time of [SICOMP'86].

We provide a different proof of universal optimality.
Instead of restricting the computational model, we further specify the output. We require as output (1) the convex hull, and (2) for each internal point of $P$ a witness for it being internal.
Our argument is shorter, perhaps simpler, and applicable in more general models of computation.    
\end{abstract}

\newpage
\section{Introduction}

Convex hulls are a central topic in computational geometry. By sorting a point set $P$ in lexicographical order, one can construct its convex hull in $O(n \log n)$ time, which is worst-case optimal. Traditional worst-case analysis often faces criticism for being overly coarse or pessimistic. To address these concerns, researchers introduced more nuanced complexity measures that account not only for the input size but also for additional parameters that capture the difficulty of the instance. 
A classical example is output-sensitive analysis, where the running time analysis depends on the output size~\cite{kirkpatrick1986ultimate}. 
Other algorithms measure their performance by the spread of the point set, defined as the ratio of the maximum to minimum pairwise distances among the points~\cite{erickson2005dense}. More examples include algorithmic complexity parametrized on various geometric characteristics of $P$, such as the ratio of circumradii to inradii or the number of reflex angles in an input polygon~\cite{matouvsek1994fat,de1997realistic}.

Afshani, Barbay, and Chan~\cite{afshani2009instance} observe that, for geometric problems, the worst-case algorithmic analysis contains a double maximum. 
They only consider \emph{correct} algorithms, which we formally define in the preliminaries. 
For now, denote by $\mathbb{P}_n$ all point sets of size $n$ and for any $P \in \mathbb{P}_n$ by $\mathbb{I}_P$ all size-$n$ arrays that contain the points of $P$ in some order. 
For an algorithm $A$, denote by $\rho(A, I_P)$ its runtime when the input is $I_P$. 
Then, for a fixed correct algorithm $A$ and input size $n$, the worst-case running time is:
\[
\textnormal{worst-case}(A, n) := \max_{P \in \mathbb{P}_n }  \, \max_{I_P \in \mathbb{I}_P} \, \rho(A, I_P).
\]

\noindent
An algorithm $A$ is worst-case optimal if 
there exists no algorithm $A'$ whose worst-case running time is asymptotically smaller than that of $A$. 
Prior works perform a finer-grained analysis by restricting the first maximum.
For example, output-sensitive running time is defined as:
\[
\textnormal{output-sensitive}(A, n, k) := \max_{P \in \mathbb{P}_n \textnormal{ and } P \textnormal{ has } k \textnormal{ points on the convex hull} } \, \, \max_{I_P \in \mathbb{I}_P} \, \rho(A, I_P).
\]

Afshani, Barbay, and Chan~\cite{afshani2009instance} obtain an even stronger quality guarantee by eliminating the first maximum all-together. 
They call their notion of optimality \emph{instance-optimality in the order-oblivious setting}.
Recently, this notion has been called \emph{universal optimality}~\cite{haeupler2024universal, Haeupler2024Fast, van2025simpler}. For a fixed point set $P$, we define the \emph{universal} running time of an algorithm $A$ as:
\[
\textnormal{universal}(A, P) := \max_{I_P \in \mathbb{I}_P} \, \rho(A, I_P).
\]

An algorithm $A$ is universally optimal if for all point sets $P$,  
there exists no algorithm $A'$ whose universal running time is asymptotically smaller than that of $A$. Observe that any universally optimal algorithm is automatically output-sensitive. 

\subparagraph{Contribution.}
Afshani, Barbay, Chan~ prove that the algorithm by Kirkpatrick, McQueen, and Seidel~\cite{kirkpatrick1986ultimate} is universally optimal.
Their proof restricts the model of computation to any algebraic decision tree model where the test functions have at most constant degree and have at most a constant number of arguments.
They present an involved algebraic argument to construct a lower bound for each point set $P$ that matches the universal running time of~\cite{kirkpatrick1986ultimate}.

Here, we give a different proof.
Rather than restricting the model of computation, we give a more extensive but natural specification of the output. 
We call a point of $P$ \emph{internal} if it does not appear on the boundary of the convex hull. 
We require that any algorithm computes (1) the convex hull and (2) for each internal point a witness for its being internal.
For a point set $P$, we apply a simple combinatorial counting argument over all outputs to obtain a lower bound that matches the universal running time of~\cite{kirkpatrick1986ultimate}.
Our argument is shorter, arguably simpler, and holds in more general models of computation.

\newpage
\section{Preliminaries}

We follow~\cite{afshani2009instance} and assume that the input points lie in general position. I.e., $P$ contains no duplicates and no collinear triples. 
For any integer $n$, we denote by $\mathbb{P}_n$ all planar point sets of $n$ points that lie in general position. 
We denote by $\mathbb{I}_P$ all arrays of size $n$ that store $P$.

\begin{definition}
For a planar point set $P$, its \emph{convex hull} $\CH(P)$ is the minimum convex region that contains all points in $P$.
\end{definition}

\begin{definition}
    For a planar point set $P$, its \emph{convex hull vertices} $\ch(P)$ are the points in $P$ that lie on the boundary of $\CH(P)$. 
\end{definition}

\subparagraph{The Convex Hull Problem.} Our input is an array $I_P\in \mathbb{I}_P$ of $n$ points $P$ in general position.
Convex hull algorithms output $\ch(P)$ in cyclical ordering. 
For our analysis, we further require that the output contains, for all points in $P - \ch(P)$, a \emph{witness} that they are internal:

\begin{definition}
    For any $p \in P$, a triangle $t$ (whose corners lie in $P$) is a \emph{witness} of $p$ if $p$ is contained in the interior of $t$.
\end{definition}

\begin{observation}
A point $p \in P$ does not lie on the boundary of $\CH(P)$ if and only if there exists at least one witness of $p$. 
\end{observation}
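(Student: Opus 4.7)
The plan is to prove the two directions of the equivalence separately, with the forward direction relying on a triangulation of $\CH(P)$ together with the general position assumption.

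The backward direction is essentially immediate. Suppose $t$ is a triangle with corners $q_1, q_2, q_3 \in P$ containing $p$ in its interior. Since $q_1, q_2, q_3 \in \CH(P)$ and $\CH(P)$ is convex, we have $t \subseteq \CH(P)$, and therefore the interior of $t$ is contained in the interior of $\CH(P)$. Hence $p \in \Interior(\CH(P))$ and in particular $p$ does not lie on the boundary of $\CH(P)$.

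For the forward direction, assume $p$ does not lie on the boundary of $\CH(P)$. Since $p \in P \subseteq \CH(P)$, this means $p \in \Interior(\CH(P))$. The idea is to take any triangulation $T$ of $\CH(P)$ whose vertex set is exactly $\ch(P)$, for instance a fan triangulation from a single hull vertex. Since the triangles of $T$ cover $\CH(P)$, the point $p$ lies in the closed region of some triangle $t \in T$ with vertices $q_1, q_2, q_3 \in \ch(P) \subseteq P$. I then claim $p$ must lie in the open interior of $t$, which makes $t$ the desired witness.

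The key step is using general position to rule out degenerate placements of $p$ relative to $t$. First, $p$ cannot coincide with a vertex of $t$, since the vertices of $T$ all lie in $\ch(P)$ while $p \notin \ch(P)$ (as $p$ lies in the interior of $\CH(P)$). Second, $p$ cannot lie in the relative interior of an edge of $t$: such an edge connects two distinct points of $P$, and $p$ being collinear with these two points would contradict the assumption that $P$ contains no collinear triples. Hence $p$ lies strictly inside $t$, proving the observation. The only subtle point in the argument is ensuring strict containment in the interior rather than on the boundary of a triangle, and this is precisely where the general position hypothesis is invoked; no further obstacles arise.
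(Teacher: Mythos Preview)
Your argument is correct. The paper itself offers no proof of this observation; it is stated as self-evident and left to the reader, so there is nothing to compare against. Your triangulation argument for the forward direction, together with the appeal to general position to exclude $p$ landing on a vertex or edge of the chosen triangle, is exactly the natural justification one would supply if asked to spell it out.
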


Requiring the output to contain witnesses is a natural condition as it is effectively asking the algorithm $A$ to verify its correctness. The concept of certification and verifiability is central to the design of algorithms. 
Without this requirement, there are oblivious decision trees (defined below) that compute the convex hull in $O(k \log n)$ time.

\subparagraph{Formalising output.}
Our output consists of two lists: A \emph{hull list} $H$ encodes $\ch(P)$ in cyclical order, and 
a \emph{witness list} $W$ defines for each point in $P - \ch(P)$ a witness. 
Formally, a \emph{hull list} $H = (h_1, \ldots, h_k)$ of $I_P$ is a sequence of integers such that the points $I_P[h_i]$, for $i \in [k]$, are precisely the $k$ points on the boundary of $\CH(P)$ in their cyclical ordering. 
We require that $I_P[h_1]$ is the leftmost point of $ch(P)$. A \emph{witness list} (we refer ahead to Figure~\ref{fig:witness_list}) of $I_P$ is a sequence $W = (a_i, b_i, c_i)_{i=1}^{n}$ of integer triples such that:
\begin{itemize}
    \item $a_i = b_i = c_i = -1$ if $I_P[i]$ lies on $\CH(P)$, and otherwise
    \item the triangle $(I_P[a_i], I_P[b_i], I_P[c_i])$ is a witness of $I_P[i]$.
\end{itemize}

\subparagraph{Algorithms.}
We define an \emph{algorithm} for convex hull construction as any comparison-based algorithm that receives any ordered set of planar points $I_P$ and  outputs a hull list and witness list of $I_P$.
Given a fixed input $I_P$ and an algorithm $A$, the \emph{running time} $\rho(A, I_P)$ is the number of instructions used by $A$ to terminate on the input $I_P$.

\subparagraph{Decision trees.}
For $n \in \mathbb{N}$, we define an \emph{abstract decision tree} $T$ as a tree whose inner nodes have two children. The inner nodes contain no additional information.
Its leaves contain two ordered lists. The first is an arbitrary list of integers and the second is a list of $n$ integer triples. 
We define an \emph{oblivious} decision tree $T$ as any abstract decision tree such that for any point set $P \in \mathbb{P}_n$, and any $I_P \in \mathbb{I}_P$, there exists a root-to-leaf path in  $T$ where the two lists at the leaf are a hull list and witness list of $I_P$, respectively. 
We denote by $\mathcal{T}^O_n$ the set of all oblivious decision trees for $n \in \mathbb{N}$.

\begin{definition}
    \label{def:correct}
    A comparison-based algorithm $A$ is \emph{correct} if for any input $I_P$ it outputs a correct hull and witness list. 
Observe that any correct comparison-based algorithm $A$ for convex hull construction has, for each $n$, a corresponding oblivious decision tree in $\mathcal{T}^O_n$.
\end{definition}

\subparagraph{Worst-case optimality.}
Let $\mathcal{A}$ denote the set of all correct algorithms. For an algorithm $A\in \mathcal{A}$ the \emph{worst-case running time} is defined as
\[
\textnormal{worst-case}(A, n) := \max_{P \in \mathbb{P}_n }  \, \max_{I_P \in \mathbb{I}_P} \, \rho(A, I_P).
\]

An algorithm is worst-case optimal if there exists a constant $c$ such that, for all $n$ large enough, $\textnormal{worst-case}(A, n) \leq c \cdot \min\limits_{A' \in \mathcal{A}} \textnormal{worst-case}(A', n).$

\begin{observation}
For any algorithm $A$ and $n$, the worst-case running time of $A$ is lower bounded by the minimum height among all decision trees in $\mathcal{T}^O_n$. Formally, $\forall A \in \mathcal{A}, \forall n \in \mathbb{N}$,
\[
\textnormal{worst-case}(A, n) = \max_{P \in \bP_n} \max_{I_P \in \bI_I} \rho(A, I_P) \geq  \min_{T \in \mathcal{T}^O_n} \textnormal{Height}(T).
\]
\end{observation}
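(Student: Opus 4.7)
The plan is to show that from any correct comparison-based algorithm $A$ and any fixed $n$, one can extract an oblivious decision tree $T_A \in \mathcal{T}^O_n$ whose height is bounded above by $\textnormal{worst-case}(A,n)$. The desired inequality then follows immediately.

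First, I fix $A \in \mathcal{A}$ and $n \in \mathbb{N}$ and consider the behaviour of $A$ on all inputs of size $n$. Since $A$ is comparison-based and deterministic, its control flow on an input depends only on the outcomes of the binary comparisons it performs: every non-branching instruction (arithmetic, memory access, writing output) is fully determined once the comparison outcomes so far are fixed. I build $T_A$ by unrolling this control flow. Each internal node corresponds to one comparison, with its two children corresponding to the two possible outcomes; each leaf corresponds to a terminating trace. I label every leaf with the hull list and witness list that $A$ writes on that trace, which makes $T_A$ a well-defined abstract decision tree as per the preliminaries.

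Next, I bound $\textnormal{Height}(T_A)$. Any root-to-leaf path in $T_A$ records the comparisons made during some execution of $A$, and the number of comparisons on such a trace is at most the number of instructions executed on that input. Hence
\[
\textnormal{Height}(T_A) \;\leq\; \max_{P \in \mathbb{P}_n}\,\max_{I_P \in \mathbb{I}_P}\,\rho(A,I_P) \;=\; \textnormal{worst-case}(A,n).
\]
Finally, I verify $T_A \in \mathcal{T}^O_n$. By Definition~\ref{def:correct}, for every $P \in \mathbb{P}_n$ and every $I_P \in \mathbb{I}_P$, running $A$ on $I_P$ produces a correct hull list and witness list; this execution traces a unique root-to-leaf path in $T_A$ whose leaf is labelled with exactly this output. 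Thus $T_A$ satisfies the oblivious condition, so $\min_{T \in \mathcal{T}^O_n}\textnormal{Height}(T) \leq \textnormal{Height}(T_A) \leq \textnormal{worst-case}(A,n)$, as claimed.

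The main subtlety is the unrolling step: one must argue that the control flow of a deterministic comparison-based algorithm really does form a tree whose depth is controlled by the instruction count. I would make this explicit by noting that non-branching instructions contribute only to the length of a trace (not to its branching structure), and that the comparison-based assumption is exactly what ensures each branching point has two children indexed by a comparison outcome. Once this is spelled out, the remainder of the argument is a direct check against the definitions.
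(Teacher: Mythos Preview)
Your proposal is correct and is essentially the argument the paper has in mind: the paper states this as an unproved Observation, relying on the remark in Definition~\ref{def:correct} that any correct comparison-based algorithm has a corresponding oblivious decision tree in $\mathcal{T}^O_n$. You have simply spelled out that correspondence and the height bound explicitly, which is exactly the intended justification.
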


\subparagraph{Universal optimality.}
Afshani, Barbay and Chan~\cite{afshani2009instance} introduce a strictly stronger notion of algorithmic optimality which they call \emph{instance optimality in the order-oblivious setting}.
We will instead use the equivalent modern term \emph{universal optimality}~\cite{haeupler2024universal, Haeupler2024Fast, van2025simpler}. 
Intuitively, an algorithm is \emph{universally optimal} if it is worst-case optimal for every fixed point set $P$. 
Formally, we define the \emph{universal} running time of an algorithm $A$ and point set $P$ as
\[
\textnormal{universal}(A, P) := \max_{I_P \in \mathbb{I}_P} \, \rho(A, I_P).
\]

An algorithm is universally optimal if there exists a constant $c$ such that for all $P$, $\textnormal{universal}(A, P) \leq c \cdot \min\limits_{A' \in \mathcal{A}} \textnormal{universal}(A', P).$
To show universal optimality, we introduce the concept of \emph{clairvoyant decision trees}. 
For a fixed point set $P$, a \emph{clairvoyant decision tree} $T$ is an abstract decision tree such that for any input $I_P \in \mathbb{I}_P$, there exists a root-to-leaf path in  $T$ such that the two lists at the leaf are a hull list and witness list of $I_P$, respectively. 
We denote by $\mathcal{T}_P^C$ the set of all clairvoyant decision trees for a fixed planar point set $P$. 

\begin{observation}
    \label{obs:lower_bound_decision_tree}
    For any planar point set $P$ in general position, for any algorithm $A$, the universal running time of $A$ is lower bounded by the minimum height amongst all clairvoyant decision trees in $\mathcal{T}_P^C$.
    Formally, for all planar point set $P$ in general position, for all $ A \in \mathcal{A}$,
    \[
  \textnormal{universal}(A, P)  = \max_{I_P \in \bI_I} \rho(A, I_P) \geq  \min_{T \in \mathcal{T}^C_P } \textnormal{Height}(T).
    \]
\end{observation}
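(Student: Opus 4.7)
The plan is to construct, for any correct comparison-based algorithm $A \in \mathcal{A}$ and any fixed point set $P \in \mathbb{P}_n$ in general position, a clairvoyant decision tree $T_A^P \in \mathcal{T}_P^C$ whose height is at most $\textnormal{universal}(A, P)$. Combined with the definition of the minimum on the right-hand side, this immediately gives the claimed inequality.

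First, I would perform the standard translation from a comparison-based algorithm into an abstract decision tree $T_A$. Each internal node corresponds to a single comparison $A$ performs on the input, with the two children corresponding to the two possible outcomes of that comparison. Each leaf corresponds to a terminal state of $A$ and stores the hull list and witness list that $A$ would output in that state. Because $A$ is comparison-based, its control flow on any input $I$ is determined entirely by the sequence of comparison outcomes, so execution on $I$ traces a unique root-to-leaf path in $T_A$, and the depth of that path is at most the number of instructions $\rho(A, I)$ executed by $A$.

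Next, I would restrict $T_A$ to the subtree $T_A^P$ obtained by removing every edge and subtree that is not traversed by the execution of $A$ on some input $I_P \in \mathbb{I}_P$. The result is still an abstract decision tree in the sense of the preliminaries: inner nodes branch into two children, and leaves carry the two output lists. Its height satisfies
\[
\textnormal{Height}(T_A^P) \;\leq\; \max_{I_P \in \mathbb{I}_P} \rho(A, I_P) \;=\; \textnormal{universal}(A, P).
\]
By construction, for every $I_P \in \mathbb{I}_P$ the execution of $A$ on $I_P$ traces a root-to-leaf path in $T_A^P$. Correctness of $A$ (Definition~\ref{def:correct}) guarantees that the two lists stored at that leaf form a valid hull list and witness list of $I_P$. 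Hence $T_A^P$ satisfies the defining property of a clairvoyant decision tree for $P$, so $T_A^P \in \mathcal{T}_P^C$, and therefore $\min_{T \in \mathcal{T}_P^C} \textnormal{Height}(T) \leq \textnormal{Height}(T_A^P) \leq \textnormal{universal}(A, P)$.

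The only step requiring any care is the first translation: one must use that a comparison-based algorithm branches solely on comparison outcomes, so that the execution on each input is governed by a binary branching structure and each comparison is one of the instructions counted by $\rho$. This is standard for the comparison model adopted in the preliminaries, so it is a bookkeeping remark rather than a genuine obstacle.
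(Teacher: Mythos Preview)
Your proposal is correct and matches the paper's approach: the paper states this as an observation without proof, relying on the remark in Definition~\ref{def:correct} that any correct comparison-based algorithm has a corresponding decision tree, which specialised to inputs in $\mathbb{I}_P$ yields a clairvoyant tree.

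One small wrinkle: after pruning $T_A$ to the branches reached by some $I_P \in \mathbb{I}_P$, an internal node may be left with only one child, so the claim that ``inner nodes branch into two children'' need not hold literally. The fix is routine---contract every unary internal node---which only shortens root-to-leaf paths and hence can only decrease the height, so the bound $\textnormal{Height}(T_A^P) \le \textnormal{universal}(A,P)$ and membership in $\mathcal{T}_P^C$ both survive.
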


\subparagraph{Prior work.}
Kirkpatrick, McQueen, and Seidel~\cite{kirkpatrick1986ultimate} give an efficient algorithm to compute for any point set $P$ its convex hull. We present this algorithm in Section~\ref{sec:upper_bound} and observe that it naturally produces a hull list and a witness list.
Afshani, Barbay, and Chan~\cite{afshani2009instance} show that the algorithm in~\cite{kirkpatrick1986ultimate} is universally optimal in a somewhat restricted model of computation which we define below. We note for the reader that this definition is not vital to our analysis.

\begin{definition}[Definition 3.1~\cite{afshani2009instance}]
    A function $f : (\mathbb{R}^d)^c \mapsto \mathbb{R}$ is \emph{multilinear} if the restriction of $f$ is a
linear function from $R^d$ to $R$ when any $(c - 1)$ of the $c$ arguments are fixed. Equivalently, $f$
is multilinear if $f((x_{11}, \ldots, x_{1d}), \ldots{...} ,(x_{c1}, \ldots ,x_{cd}))$ is a multivariate polynomial function
that never multiplies coordinates of the same point. 
\end{definition}

\noindent
In the \emph{multilinear decision tree model}~\cite{afshani2009instance}, 
algorithms can access the input points only through tests of the form $f(p_1, \ldots, p_c) > 0$ for a
multilinear function $f$ with $c$ constant.

\begin{theorem}[Theorem~3.5+3.6 in~\cite{afshani2009instance}]
    \label{thm:old}
    The algorithm in~\cite{kirkpatrick1986ultimate} is universally optimal for dimension $d=2$ in the multilinear decision tree model.
\end{theorem}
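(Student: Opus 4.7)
The plan is to prove Theorem~\ref{thm:old} by pairing an upper bound realized by the KMS algorithm with a matching combinatorial lower bound obtained by counting on the output space.

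\textbf{Upper bound.} In Section~\ref{sec:upper_bound} the KMS algorithm of~\cite{kirkpatrick1986ultimate} is recalled. I would observe that its natural output is a hull list together with a witness list derived from the fan triangulation rooted at the leftmost hull vertex of $P$. Let $\Delta$ range over this fan triangulation and set $n_\Delta = |P \cap \mathrm{Int}(\Delta)|$. The goal is to show that the universal running time of KMS on $P$ is $O(n + \log(n!/\prod_\Delta n_\Delta!))$; this is essentially the original KMS analysis reinterpreted in terms of the fan-cell sizes.

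\textbf{Lower bound.} By Observation~\ref{obs:lower_bound_decision_tree}, it suffices to lower-bound the minimum height over all clairvoyant decision trees $T \in \mathcal{T}_P^C$. Each of the $n!$ permutations $I_P \in \mathbb{I}_P$ follows a unique root-to-leaf path. For each leaf $\ell$, let $M_\ell$ denote the number of permutations reaching $\ell$; then $\sum_\ell M_\ell = n!$, so the number of leaves is at least $n!/\max_\ell M_\ell$ and hence
\[
\textnormal{Height}(T) \;\geq\; \log_2\!\big(n!/\max_\ell M_\ell\big).
\]
Write $M(H,W) := |\{I_P \in \mathbb{I}_P : (H,W) \text{ is a valid hull/witness output for } I_P\}|$, so $M_\ell \leq M(H_\ell, W_\ell)$. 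The hull list fixes the positions of the $k$ hull points (the leftmost-first convention pins the cyclic labelling), so $M(H,W)$ counts the number of ways to place the $n-k$ internal points of $P$ into the $n-k$ non-hull positions such that each position $i$ receives a point lying in its witness triangle $t_i$. A separate $\Omega(n)$ bound comes from the $\Theta(n)$-size output.

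\textbf{The central combinatorial lemma} to establish is: for every valid $(H,W)$, one has $M(H,W) \leq \prod_\Delta n_\Delta!$, with $\Delta$ ranging over the canonical fan triangulation of $P$. Granted this lemma, the tree height is at least $\log_2(n!/\prod_\Delta n_\Delta!)$, which by Stirling is $\Theta(\sum_\Delta n_\Delta \log(n/n_\Delta))$, matching the KMS universal running time up to the additive $O(n)$.

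\textbf{Main obstacle.} The hard part is proving the combinatorial lemma when the adversary chooses witness triangles that are \emph{not} aligned with the fan triangulation: a single witness triangle in $W$ can overlap several fan cells, so $M(H,W)$ becomes the permanent of a bipartite constraint graph whose edges encode ``internal point $p$ lies in witness triangle $t_i$''. My plan is to bound this permanent either (i) by applying Bregman--Minc row-wise and invoking the geometric constraint that corners of witness triangles must themselves lie in $P$, or (ii) by a direct bijective argument that associates with each permutation valid for $(H,W)$ an ordered tuple respecting the fan partition, and to check that any ``coarse'' witnessing strategy the adversary might attempt is blocked by the convex-position geometry (e.g., the diagonal $v_i v_j$ of the hull separates the remaining hull vertices, preventing a triangle with hull corners from swallowing several fan cells intact).
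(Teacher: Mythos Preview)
A framing note first: Theorem~\ref{thm:old} is quoted from~\cite{afshani2009instance}; the present paper does not re-prove it but instead proves the restatable \emph{main} theorem, trading the multilinear-model restriction for the witness-output requirement. Your proposal is really an attempt at the latter, so I compare it against the paper's own argument.

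Your central combinatorial lemma --- $M(H,W)\le\prod_\Delta n_\Delta!$ over the fan triangulation --- is false. Take a convex quadrilateral hull $a,b,c,d$ with $a$ leftmost, so the two fan cells are $abc$ and $acd$. The diagonals $ac$ and $bd$ cross inside the hull, hence the hull-corner triangle $bcd$ meets both fan cells in non-empty regions. Place $m_1\ge 1$ internal points in $bcd\cap abc$ and $m_2\ge 1$ in $bcd\cap acd$, and let $W$ assign the single witness triangle $bcd$ to every internal position. Every permutation of the $m_1+m_2$ internal points among the internal slots is then compatible with $(H,W)$, so $M(H,W)=(m_1+m_2)!>m_1!\,m_2!=\prod_\Delta n_\Delta!$. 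Neither Bregman--Minc nor a bijection can salvage a false inequality, and the geometric obstruction you hoped for (``a hull-corner triangle cannot swallow several fan cells'') is precisely what fails here. Your upper-bound claim is also not ``essentially the original KMS analysis'': KMS recurses along median splits, which produce the quadrangle tree, not the fan.

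The paper avoids any fixed triangulation. The upper bound is phrased through the \emph{quadrangle tree} $\cQ$ that mirrors the KMS recursion, and both bounds are measured against the count $|\OD(P,\cQ)|$ of \emph{ordered downdrafts} (maps pushing each internal point strictly deeper in the partial order $\prec_\cQ$). The matching step is not a product bound but an injection $V(P,W)\hookrightarrow\OD(P,\cQ)\times\{1,2,3\}^n\times\HL(P)$: for each internal point one selects a witness corner that is strictly $\prec_\cQ$-larger (one always exists, since the point is a strict convex combination of its three witness corners and depth in the relevant halfspace is linear), records the resulting downdraft, and linearly orders each fiber by input index. The $3^n$ slack, absorbed into the additive $O(n)$ term, is exactly what lets witness triangles straddle several cells --- the phenomenon that breaks your fan-based bound.
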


\noindent
We instead show universal optimality by a counting all possible hull and witness lists:

\begin{restatable}{theorem}{main}
    If we define $\mathcal{A}$ as the set of  all \emph{correct} comparison-based convex hull algorithms (Definition~\ref{def:correct}) then the algorithm in~\cite{kirkpatrick1986ultimate} for $d=2$ is universally optimal.
\end{restatable}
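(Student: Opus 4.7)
The plan is to apply Observation~\ref{obs:lower_bound_decision_tree}: it suffices to show that for every point set $P$, the universal running time of the KMS algorithm on $P$ is within a constant factor of $\min_{T \in \mathcal{T}^C_P}\height(T)$. I would lower bound this minimum height by counting outputs and match the upper bound using the recursive analysis of KMS.

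\textbf{Step 1: Counting lower bound on tree height.} Fix $P \in \bP_n$ and let $T \in \mathcal{T}^C_P$ have $L$ leaves; since $T$ is binary, $\height(T) \geq \log_2 L$. For each leaf $\ell$, let $(H_\ell, W_\ell)$ be its labeled output and define $S_\ell := \{I_P \in \bI_P : (H_\ell, W_\ell) \text{ is valid for } I_P\}$. Because $T$ is clairvoyant, every input reaches a leaf with $I_P \in S_\ell$, so $\bI_P = \bigcup_\ell S_\ell$ and hence $L \geq n!/\max_\ell |S_\ell|$.

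\textbf{Step 2: Upper bound on $|S_\ell|$ (the combinatorial core).} For a valid output $(H, W)$ and a compatible $I_P$, each interior position $i$ carries a triple $(a_i, b_i, c_i)$ whose three corresponding points in $I_P$ form an enclosing triangle of $I_P[i]$. Grouping interior positions by the combinatorial type of their witness, together with a canonicalization step that (without loss of generality) replaces non-hull-vertex witness corners by hull-vertex ones, partitions the interior of $\CH(P)$ into triangles $\Delta_1, \ldots, \Delta_m$ with $n_j$ interior points of $P$ lying in $\Delta_j$. Any $I_P \in S_\ell$ is then a bijection placing the $n_j$ points of $\Delta_j$ into the $n_j$ group-$j$ positions, yielding $|S_\ell| \leq \prod_j n_j!$. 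Combined with Stirling's formula, this gives $\height(T) \geq \log_2(n!/\prod_j n_j!) = \Omega(n + \sum_j n_j \log(n/n_j))$.

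\textbf{Step 3: Matching the KMS upper bound.} The analysis of the KMS recursion (Section~\ref{sec:upper_bound}) yields a universal running time on $P$ of $O(n + \sum_j n_j \log(n/n_j))$ for the same canonical triangle family $\{\Delta_j\}$: each interior point is discarded at the recursion level at which its enclosing canonical triangle is first identified, contributing only $O(\log(n/n_j))$ to the total work. Combining Steps 1--3 with Observation~\ref{obs:lower_bound_decision_tree} establishes the theorem.

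\textbf{Main obstacle.} The delicate step is the combinatorial lemma of Step 2. A decision tree may label leaves with witness triples whose three corners reference positions holding interior points, so the encoded witness triangle itself depends on $I_P$, potentially giving extra flexibility to the set $S_\ell$. Showing that this apparent flexibility can always be simulated by a labeling whose witnesses use only hull vertices --- so that the partition-based product-of-factorials bound still controls $|S_\ell|$ --- is where the actual work of the proof sits, and where the general-position hypothesis on $P$ is exploited.
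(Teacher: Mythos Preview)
Your framework --- an information-theoretic lower bound from counting leaf outputs of a clairvoyant tree, matched against a recursive upper bound for KMS --- is the paper's framework. Step~1 is exactly the argument behind Theorem~\ref{theo:algo_lower_bound}. The divergence, and the genuine gap, is Step~2.

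You correctly flag the obstacle: a witness triple $(a_i,b_i,c_i)$ may reference positions holding interior points, so the enclosing triangle varies across $I_P\in S_\ell$. Your remedy is to ``canonicalise'' each witness to one whose corners are hull vertices, obtaining a partition into triangles $\Delta_1,\dots,\Delta_m$ and hence $|S_\ell|\le\prod_j n_j!$. But this is not a well-defined operation: canonicalisation cannot be a function of $W$ alone, since which geometric triangle a triple encodes depends on $I_P$; and even when all corners are hull positions (so the triangles are fixed by $H$), nothing forces those triangles to be pairwise disjoint, so the product-of-factorials bound does not follow. Your Step~3 then asserts that Section~\ref{sec:upper_bound} shows KMS runs in $O\bigl(\sum_j n_j\log(n/n_j)\bigr)$ for ``the same'' family $\{\Delta_j\}$; it does not. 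That family was extracted from a leaf of an adversarial clairvoyant tree, not from the KMS recursion, and Section~\ref{sec:upper_bound} expresses the running time in terms of the quadrangle tree $\cQ$ that KMS itself builds, not in terms of any hull-vertex triangulation.

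The paper resolves your obstacle without any canonicalisation. The upper bound is stated as $O\bigl(n+\log(n^n/|\OD(P,\cQ)|)\bigr)$, where $\OD(P,\cQ)$ counts \emph{ordered downdrafts}: maps sending each interior point strictly downward in a partial order $\prec_\cQ$ induced by $\cQ$, together with a linear order on each fibre. The lower bound is $\Omega\bigl(n+k\log n+\log(n!/\Vmax(P))\bigr)$. These are linked by an explicit injection $V(P,W)\hookrightarrow\OD(P,\cQ)\times\CA(W)\times\HL(P)$, yielding $\Vmax(P)\le|\OD(P,\cQ)|\cdot 3^n\cdot n^k$. The single geometric fact driving the injection is: if $p$ lies strictly inside a triangle with corners $u,v,w\in P$, then at least one of $p\prec_\cQ u$, $p\prec_\cQ v$, $p\prec_\cQ w$ holds. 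Thus \emph{any} witness list --- with arbitrary, possibly interior, corners --- becomes a downdraft by selecting one such corner per point; the factors $3^n$ (the corner choice) and $n^k$ (the hull list) absorb the ambiguity, and the ordering on each fibre encodes the remaining bits of $I_P$. No partition or triangulation is ever built.
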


\newpage

\begin{figure}[b]
    \centering
    \includegraphics[width= \linewidth]{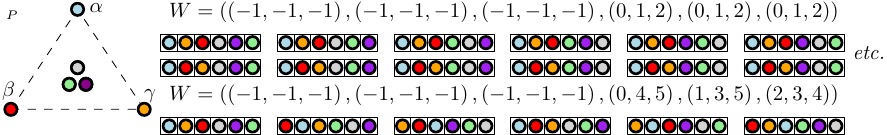}
    \caption{
    For $W  = ( (-1, -1, -1) \, , (-1, -1, -1) \, , (-1, -1, -1) \, , (0, 1, 2) \, , (0, 1, 2) \, , (0, 1, 2) ) $ there are 36 $I_P \in \mathbb{I}_P$ for which $W$ is a witness list (any array that assigns $(\alpha, \beta, \gamma)$ to the first three cells). 
    For $W = ( (-1, -1, -1) \, , (-1, -1, -1) \, , (-1, -1, -1) \, , (0, 4, 5) \, , (1, 3, 5) \, , (2, 3, 4) )$ there are only 6.
    }
    \label{fig:witness_list}
\end{figure}

\section{Creating a universal lower bound}
\label{sec:lower_bound}

Let $P$ be a point set of $n$ points in general position.
We use our definition of hull lists and witness lists to show a universal lower bound. 
That is, we define a quantity $Q$ such that for all algorithms $A \in \mathcal{A}$, $\textnormal{Universal}(A, P) \geq Q$. 
To this end, we explore the definition of a witness list. 
Recall that $\mathbb{I}_P$ is the set of all $n!$ arrays  that contain $P$.
Consider any list $W = (a_i, b_i, c_i)_{i=1}^{n}$ of integer triples.  

For a fixed $W$, there can be many $I_P \in \mathbb{I}_P$ for which $W$ is a witness list.
For example, let $P$ be a point set containing $n-3$ points in a small ball around $(0, 0)$, and the three corners $(\alpha, \beta, \gamma)$ of a unit equilateral triangle centred at $(0, 0)$ (we refer to Figure~\ref{fig:witness_list}).
Let:
\[
W = ( (-1, -1, -1), (-1, -1, -1), (-1, -1, -1), (0, 1, 2), (0, 1, 2), \ldots , (0, 1, 2)).
\]

Any $I_P \in \mathbb{I}_P$ which contains $(\alpha, \beta, \gamma)$ in the first three positions has $W$ as a witness list. 
Specifically, there are $3! \cdot (n-3)!$ arrays  $I_P \in \mathbb{I}_P$ which have $W$ as a witness list. There does not exist any other witness list $W^*$ where more than $3!(n-3)!$ arrays in $\mathbb{I}_P$ have $W^*$ as a witness list. 
We use this maximum quantity for our lower bound:

\begin{definition} 
    Let $P$ be a point set and $W = (a_i, b_i, c_i)_{i=1}^{n} \subset \mathbb{Z}^{n \times 3}$. We define
    \[
     V(P, W) := \set{I_P \in \bI_P \Big| W \text{ is a witness list of } I_p}.
    \]
    We consider the maximal number of input arrays that can have the same witness list: 
    \begin{align*}
    \Vmax(P) := \max_{W \in \bZ^{n \times 3}} |V(P, W)|.
    \end{align*}
\end{definition}

\begin{theorem} \label{theo:algo_lower_bound}
    Let $P$ be a point set of $n$ points and $A$ be any algorithm in $\mathcal{A}$, then
    \[
   \textnormal{universal}(A, P) \in \Omega \Big(n + |\ch(P)|\cdot \log n + \log\frac{n!}{\Vmax(P)}\Big). 
    \]
\end{theorem}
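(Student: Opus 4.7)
Let $k = |\ch(P)|$. The plan is to combine two lower bounds on $\rho(A, I_P)$: a trivial $\Omega(n)$ bound from the fact that any correct algorithm must write a witness list of length $n$, and a counting-based bound $\Omega(k \log n + \log(n!/\Vmax(P)))$ on the minimum height of any clairvoyant decision tree in $\mathcal{T}^C_P$. Since the maximum of two quantities is at least their average, summing them gives the claimed bound.

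For the decision-tree lower bound, I would apply Observation~\ref{obs:lower_bound_decision_tree}. Any $T \in \mathcal{T}^C_P$ must contain, for each $I_P \in \mathbb{I}_P$, at least one leaf whose $(H, W)$ output is valid for $I_P$; hence $T$ has at least $N(P)$ leaves, where $N(P)$ is the minimum number of distinct $(H, W)$ outputs whose union of ``covered'' inputs equals $\mathbb{I}_P$. The tree height is then $\geq \log_2 N(P)$. To bound $N(P)$ from below I make two observations about any single covering output $(H, W)$. First, the hull list $H(I_P)$ is \emph{forced} by $I_P$ (since $P$ fixes the cyclic order from the leftmost hull vertex), and the map $I_P \mapsto H(I_P)$ has image of size exactly $n!/(n-k)!$, so the preimage of any $H$ has size exactly $(n-k)!$ and any single $(H, W)$ covers at most $(n-k)!$ inputs. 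Second, by definition of $\Vmax(P)$, any single $W$ appears as a witness list for at most $\Vmax(P)$ inputs. Hence each $(H, W)$ covers at most $\min((n-k)!, \Vmax(P))$ inputs, and by a covering (union-bound) argument,
\[
N(P) \geq \frac{n!}{\min((n-k)!, \Vmax(P))}.
\]
Using $\min(a,b) \leq \sqrt{ab}$ and an elementary estimate $\log(n!/(n-k)!) = \Omega(k \log n)$ for all $k \in [1, n]$ (case split on $k \leq n/2$ versus $k > n/2$), taking logarithms yields $\log N(P) = \Omega(k \log n + \log(n!/\Vmax(P)))$.

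The main subtlety is the non-uniqueness of the witness list: for a given $I_P$, many different valid $W$'s exist, and the algorithm may choose any of them, which a priori could let many inputs share the same output. The counting above resolves this by noting that, regardless of the algorithm's choice, no single pair $(H, W)$ can cover more than $\min((n-k)!, \Vmax(P))$ inputs---the hull component $(n-k)!$ comes from $H$ being forced by $I_P$, and the witness component $\Vmax(P)$ comes directly from its definition. The max-vs-average trick then folds both contributions into a single logarithmic lower bound, which combined with the $\Omega(n)$ output-size bound matches the theorem.
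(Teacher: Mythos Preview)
Your proposal is correct and follows essentially the same approach as the paper. Both arguments combine the trivial $\Omega(n)$ bound with two leaf-counting lower bounds on any clairvoyant decision tree: one from the fact that each hull list is valid for exactly $(n-k)!$ inputs, and one from the fact that each witness list is valid for at most $\Vmax(P)$ inputs. The only cosmetic difference is that the paper states the two leaf bounds separately (so $\height(T)\ge \log\frac{n!}{(n-k)!}$ and $\height(T)\ge \log\frac{n!}{\Vmax(P)}$, then takes the max), whereas you package them into a single bound via $\min\bigl((n-k)!,\Vmax(P)\bigr)$ and unwind with $\min(a,b)\le\sqrt{ab}$; since $n!/\min(a,b)=\max(n!/a,n!/b)$, these are the same argument.
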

\begin{proof}
Any algorithm must spend $\Omega(n)$ time to read the input. 
Denote by $\ell$ the minimum height across all clairvoyant decision trees in $\mathcal{T}_P^C$.
By Observation~\ref{obs:lower_bound_decision_tree}, $\Omega(\ell)$ is a universal lower bound.
What remains is to bound $\ell$.  
Let $k = |\ch(P)|$. 
Let $T$ be an arbitrary clairvoyant decision tree in $\mathcal{T}_P^C$. 
Recall that each leaf of $T$ stores a hull list and a witness list. 

 For any input $I_P \in \bI_P$, there is a unique hull list of $I_P$.  
    It follows that the number of distinct hull lists, and therefore the number of distinct leaves of $T$, across all $I_P$ is at least $\frac{n!}{(n-k)!}$.
    Hence, $\ell \ge \log(\frac{n!}{(n-k)!}) \in \Omega(k \log n)$.
For any input $I \in \bI_P$, there is at least one leaf in $T$ containing a witness list $W$ of $I$. 
Therefore,
    \[
    n! = |\bI_P| \le \sum_{\ell \in L} |V(P, W)| \le (\# \textnormal{ leaves of } T) \cdot \Vmax(P) \quad \Rightarrow \quad \ell \geq \log \frac{n!}{V_{\max}(P) }. \qedhere
    \]
\end{proof}

\newpage
\section{Analysing Kirkpatrick-McQueen-Seidel}
\label{sec:upper_bound}

Kirkpatrick and Seidel present an output-sensitive algorithm for computing the convex hull for a planar point set~\cite{kirkpatrick1986ultimate}.
McQueen suggests a modification to the algorithm by Kirkpatrick and Seidel in Footnote 2 of~\cite{kirkpatrick1986ultimate}.
The modified algorithm, depicted in \cref{algo:mcqueen_kirk_seidel}, is universally optimal.
We first describe their algorithm.
Next, we construct a geometric object that we will call a \emph{quadrangle tree} and use it to analyse the running time of \cref{algo:mcqueen_kirk_seidel}.

\subparagraph{High-level overview.}
Their algorithm is recursive and it constructs the \emph{upper convex hull} of $P$. 
The original input consists of an unordered array $I_P$, from which they find the leftmost and rightmost vertex of $P$ in linear time. 
Each recursive call has as input some set $S \subset P$ and two points $p_\ell, p_r \in S$ such that $(p_\ell, p_r)$ is an edge of $\CH(S)$ (Figure~\ref{fig:algorithm}) -- we assume that the line segment between the leftmost and rightmost vertex lies on $\CH(P)$. Otherwise, this line segment splits the problem into two independent convex hull construction problems.

Given $(S, p_\ell, p_r)$, the algorithm first finds a point $m \in S$ with the median $x$-coordinate in linear time. 
It then finds in linear time an edge $(p_i, p_j)$ of $\CH(P) \cap S$ with $(p_i, p_j) \neq (p_\ell, p_r)$ such that $m$ lies inside or on the quadrangle $Q = (p_\ell, p_r, p_j, p_i)$ ($Q$ may also be a triangle, for example when $p_i = p_\ell$). 
The algorithm partitions $S$ in linear time into three sets:
\begin{itemize}
    \item $S_1$ consists of all points of $S$ that lie above or on the line $p_\ell p_i$, 
    \item $S_2$ consists of all points of $S$ that lie above or on the line $p_r p_j$, and
    \item $S^*  = S - S_1 - S_2$. $S^*$ lies within $Q$ and, therefore, its points are not on the convex hull. Through $Q$ we assign these points a witness in $O(|S^*|)$ time at no asymptotic overhead. 
\end{itemize}

\noindent
The algorithm recurses on $S_1$ and $S_2$, using the segments $(p_\ell, p_i)$ and $(p_r, p_j)$, respectively.

\begin{figure}[b]
    \centering
    \includegraphics[]{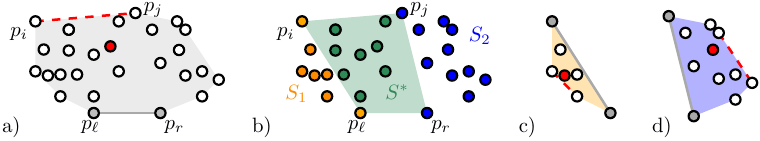}
    \caption{
    (a) The input are the points and the grey segment. We compute the point $m$ with median $x$-coordinate and the edge $(p_i, p_j)$.
    (b) Given $Q = (p_\ell, p_r, p_j, p_i)$, we partition the points into three sets: $S_1, S^*, S_2$.
    (c + d) We recurse on $S_1$ and $S_2$ using their respective edges of $Q$.
    }
    \label{fig:algorithm}
\end{figure}

\begin{algorithm}[h] 
\caption{Kirkpatrick-McQueen-Seidel$(\textnormal{unordered point set } S, \textnormal{ edge } (p_\ell, p_r) \textnormal{ of } \CH(S) )$ \cite{kirkpatrick1986ultimate}.}\label[algorithm]{algo:mcqueen_kirk_seidel}
\begin{algorithmic}[1]
    \REQUIRE $S = \set{p \in P \ | \ p \text{ lies on or above the line } p_\ell p_r}$ and $p_\ell, p_r \in \ch(P)$.
    \IF{$p_\ell = p_r$}
        \STATE Append $p_\ell$ to the hull list
        \STATE \textbf{return}
    \ENDIF
    \STATE $m \gets \arg \operatorname{median}\set{x(p) \ | \ p \in S}$
    \STATE $(p_i, p_j) \gets $ edge of $\CH(S)$ such that $m \in Q = (p_\ell, p_r, p_j, p_i)$  \hspace{1.2cm }\emph{bridge-finding}~\cite{kirkpatrick1986ultimate}.
    \STATE $S_1 \gets \set{p \in S \ | \ p \text{ lies on or above the line } p_\ell p_i}$
    \STATE $S_2 \gets \set{p \in S \ | \ p \text{ lies on or above the line } p_j p_r}$
    \STATE $S^* \gets S - S_1 - S_2$
    \STATE $\operatorname{Kirkpatrick-McQueen-Seidel}(S_1, (p_\ell, p_i))$
    \STATE $\operatorname{Kirkpatrick-McQueen-Seidel}(S_2, (p_j, p_r))$
     \STATE GiveWitness($S^*$, $p_\ell$, $p_r$, $p_j$, $p_i$) \hspace{2cm} \emph{Note that this last step is not explicit in~\cite{kirkpatrick1986ultimate}.}
\end{algorithmic}
\end{algorithm}

\subsection{Runtime analysis}

We propose a novel fine-grained runtime analysis of this algorithm. To this end, we define what we will call the \emph{quadrangle tree} (we refer to Figure~\ref{fig:quadrangle_tree} (a)). 

\begin{definition}
      A \emph{rooted convex polygon} $(C, p, q)$ is a convex polygon $C$ together with an edge $p q$ of $C$.
  Given distinct vertices of $r, s$ of $C$ the line $r s$ splits $C$ into two pieces (one piece is empty if $rs$ is an edge of $C$).  Let $C^{rs}$ denote the piece that \emph{does not} contain $p q$.
\end{definition}

\begin{definition}
    A \emph{quadrangle tree} $\cQ$ of a rooted convex polygon $(C, p, q)$ is a binary tree in which
    every node stores a quadrangle spanned by (up to) four vertices of $C$, such that:
    \begin{itemize}
        \item The quadrangle at the root node is spanned by $p, q, s, r$ where $r s$ is an edge of $C$. 
        \\
        We allow for $p = r$ and/or $q = s$.
        \item If $q \ne r$, then the root has a child node whose subtree is a quadrangle tree of $(C^{pr}, p, r)$.
        \item If $s \ne p$, then the root has a child node whose subtree is a quadrangle tree of $(C^{qs}, q, s)$.
    \end{itemize}
\end{definition}

 Given a point set $P$ and a quadrangle tree $\cQ$, the \emph{population} of the root node are all points in $P$ that lie on or inside the quadrangle $p q r s$, except
    for those that lie on the line $p q$.
    Observe that, since we assume that $P$ lies in general position, the population of the root node equals all points in the interior of $p q r s$ plus: $r$ if $p \neq r$ and $s$ if $q \neq s$. 
    
\begin{observation}
    \label{obs:quadrangle_tree}
    Given the input array $I_P$, the algorithm by Kirkpatrick, McQueen and Seidel finds the leftmost and rightmost vertices of $P$ in linear time. For each point set $P$  Algorithm~\ref{algo:mcqueen_kirk_seidel} constructs, independent of the input ordering $I_P \in \mathbb{I}_P$, a quadrangle tree $\cQ$ where each call of Kirkpatrick-McQueen-Seidel($S$, $(p_\ell, p_r)$) uniquely corresponds to a node $(C, p, q)$ in $\cQ$ with $C = \CH(S)$, $p = p_\ell$ and $q = p_r$. 
\end{observation}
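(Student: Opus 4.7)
The observation contains two claims: that the initial leftmost/rightmost computation runs in linear time, and that the entire recursion produces a quadrangle tree whose structure is a function of $P$ alone. The first claim is immediate: a single scan through $I_P$ maintaining running extremes (with a fixed tie-breaker on $y$-coordinate) computes both vertices in $O(n)$ time, and the result is clearly determined by $P$.

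For the second claim, I plan to proceed by structural induction on the recursion depth to show that every recursive call Kirkpatrick-McQueen-Seidel$(S, (p_\ell, p_r))$ is uniquely determined by $P$, and then verify that the collection of such calls matches the recursive definition of a quadrangle tree. For the inductive step, assume that some call has input $(S, p_\ell, p_r)$ determined by $P$. The median $m$ of $\set{x(p) \ | \ p \in S}$ is a function of $S$ alone (with a fixed tie-breaker to handle even $|S|$ or equal $x$-coordinates, which general position does not rule out); the bridge edge $(p_i, p_j)$ is the unique edge of $\CH(S)$ whose $x$-range contains $x(m)$, a purely geometric property of $(S, m)$; and the sets $S_1, S_2, S^*$ follow by set membership. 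Hence the two subcalls have $P$-determined inputs, completing the induction.

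It remains to assemble the recursion tree into a quadrangle tree $\cQ$ on the rooted polygon $(\CH(P \cap \text{upper}), p_\ell, p_r)$ (and symmetrically for the lower hull). For each non-terminal call $(S, p_\ell, p_r)$, I associate a node of $\cQ$ storing the quadrangle spanned by $(p_\ell, p_r, p_j, p_i)$. Setting $C = \CH(S)$, the bridge property guarantees that $p_i p_j$ is an edge of $C$, so the root condition of the quadrangle tree is satisfied with $p = p_\ell$, $q = p_r$, $r = p_i$, $s = p_j$. For the first subcall on $S_1$ with new base edge $(p_\ell, p_i)$, I would verify that $\CH(S_1) = C^{p_\ell p_i}$: since $S_1$ is precisely the set of points of $S$ on or above the chord $p_\ell p_i$, its convex hull is exactly the piece of $C$ cut off by this chord, and $(p_\ell, p_i)$ is an edge of $\CH(S_1)$ by construction. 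This matches the quadrangle tree's first-child clause provided $p_r \neq p_i$; the argument for the second subcall is symmetric.

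The main obstacle is the careful handling of degenerate cases, where the correspondence between algorithm calls and quadrangle tree nodes is most delicate. When the bridge coincides with an endpoint of the base edge (i.e.\ $p_i = p_\ell$ or $p_j = p_r$), the corresponding $S_i$ degenerates to a single point and the next recursive call immediately triggers the base case $p_\ell = p_r$. I would argue that these terminating calls precisely correspond to the omitted children in the clauses ``if $q \ne r$'' and ``if $s \ne p$'' of the quadrangle tree definition, so the algorithm produces no spurious child exactly in these cases. A brief concluding remark would combine the independent upper- and lower-hull recursion trees into a single quadrangle tree, as the observation requires.
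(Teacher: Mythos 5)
Your proof is correct and supplies exactly the justification the paper leaves implicit: the statement is given as an unproved observation, and its content is precisely your induction --- the median, the bridge, and the partition at each call are determined by the set $S$ and the base edge alone (not by the ordering $I_P$), so the recursion tree is a function of $P$, and the verification $\CH(S_1) = C^{p_\ell p_i}$ ties the recursion to the quadrangle-tree definition. One small wrinkle: you write that the first-child clause applies ``provided $p_r \neq p_i$'', matching the paper's literal condition ``$q \ne r$'', yet your own (correct) analysis of the degenerate cases shows the first child must be omitted exactly when $p_i = p_\ell$, i.e.\ when $p = r$; the paper's conditions ``$q \ne r$'' and ``$s \ne p$'' appear to be typos for ``$p \ne r$'' and ``$q \ne s$'' (consistent with its own remark that $p = r$ and $q = s$ are the allowed degeneracies), so your geometric identification of the omitted children is the right one and you should state the matching condition accordingly.
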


\begin{figure}[h]
    \centering
    \includegraphics[width=\linewidth]{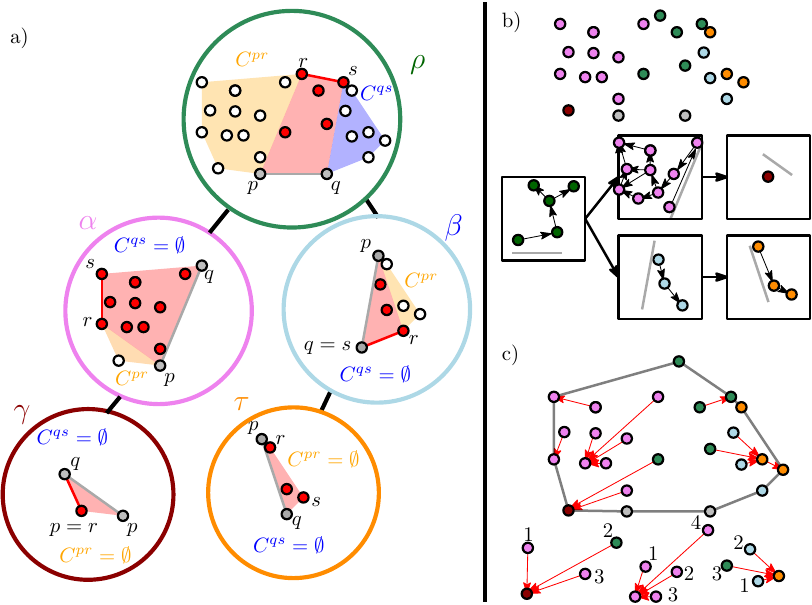}
    \caption{
    (a) From a rooted polygon $(C, p, q)$ we construct a quadrangle tree $\cQ$ with six nodes: $\rho, \alpha, \beta, \gamma, \tau$. 
    For each node in the tree, the red coloured vertices indicate its population. \\
    (b) We assign all points $p \in P$ a colour based on $r(p)$. For example, for all pink points $p$ the node $r(p)$ equals $\alpha \in \cQ$.
    We illustrate a partial order over the points in $P$ where for all $p, q$ with $r(p) = r(q)$ the point $p \prec_\cQ q$ if and only if $p$ lies further from the line supporting the grey segment. \\
    (c)  
    We create an ordered downdraft $\varphi$ by mapping each point in $P - ch(P)$ to a point in $P$. For all points $q$ where multiple points in $P$ map to $q$, we create a linear order over $\varphi^{-1}( \{q \} )$.
    }
    \label{fig:quadrangle_tree}
\end{figure}

\subparagraph{Ordered downdrafts.}
For a point set $P$ and a quadrangle tree $\cQ$ of $CH(P)$, we create an \emph{ordered downdraft}. 
Intuitively, this is a mapping $\varphi : P - ch(P) \rightarrow P$ such that for each $p \in  P - ch(P)$, the point $\varphi(p)$ lies deeper in the quadrangle tree, or, in the same node as $p$ but farther from the corresponding root edge. 
This is our most technically involved definition.

\begin{definition}[Figure~\ref{fig:quadrangle_tree}(b)]
    We define for any quadrangle tree $\cQ$ of $CH(P)$ a partial order $\prec_\cQ$ on $P$. 
    For $p \in P$, let $r(p)$ be the node in $\cQ$ within whose population $p$ lies.
    Let $H_{r(p)}$ be the halfplane defined by the rooted edge of $r(p)$. 
    For $p, q \in P$, we say that $p \prec_\cQ q$ if
    \begin{itemize}
        \item $r(p)$ is a strict ancestor of $r(q)$, or
        \item $r(p) = r(q)$, and $q$ lies deeper inside the halfspace $H_{r(p)}$ than $p$.
    \end{itemize}
\end{definition}

Given a quadrangle tree $\cQ$, we define a \emph{downdraft} $\varphi$ as a map that maps each point in $P - \ch(P)$ to a point in $P$. 
Specifically, it maps the point either to a quadrangle lower in the tree, or to a point in the same quadrangle that is strictly further away from the root edge. 

\begin{definition}    Let $P$ be a point set and $\cQ$ be a quadrangle tree for $\ch(P)$.
    A \emph{downdraft} for $(P, \cQ)$ is a map $\varphi : (P - \ch(P)) \to P$
    such that $\forall p \in P$, we have $p \prec_\cQ \varphi(p)$.
\end{definition}

Given any downdraft $\varphi$ and point $q \in P$, the \emph{fiber} $\varphi^{-1}(\{q\})$ is the set of all points $p \in P$ that get mapped to $q$.

\begin{definition}[Figure~\ref{fig:quadrangle_tree}(c)]
    An \emph{ordered downdraft} $\overline{\varphi} = (\varphi, \prec_{\overline{\varphi}})$ of $(P, \cQ)$ is a downdraft $\varphi$ of $(P, \cQ)$
    plus a total order $\prec_{\overline{\varphi}}$ on each fiber $\varphi^{-1}(\{p\})$ for $p \in P$.  We define:
    \[
    \OD(P, \cQ) \textnormal{ to be the set of all ordered downdrafts of } (P, \cQ).
    \]
\end{definition}

\subsection{A universal upper bound}

Let $A$ denote the Kirkpatrick-McQueen-Seidel algorithm.
Recall that Algorithm~\ref{algo:mcqueen_kirk_seidel} naturally creates a quadrangle tree $\cQ$ of $\CH(P)$, where each call of A($S$, $(p_\ell, p_r)$) uniquely corresponds to a node $(C, p, q)$ in $\cQ$ with $C = \CH(S)$, $p = p_\ell$ and $q = p_r$. 
We show a universal upper bound for the running time of $A$ by counting the number of ordered downdrafts of $\cQ$.

\begin{theorem}
    \label{thm:upper_bound}
    Let $P$ be a planar point set in general position with $n$ points. Let $\cQ$ be the corresponding quadrangle tree from Observation~\ref{obs:quadrangle_tree} and let $A$ denote the algorithm by Kirkpatrick, McQueen and Seidel which as input receives some $I_P \in \mathbb{I}_P$ (and the leftmost and rightmost point in $P$). 
    Then there exists a constant $c$ such that
    \[
      \forall I_P \in \mathbb{I}_P, \quad  \rho(A, I_{P}) \leq c \cdot \Big(n + \log\frac{n^n}{|\OD(P, \cQ)|}\Big).
    \]
\end{theorem}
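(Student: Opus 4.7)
My plan is to prove the upper bound in two stages: first bound $\rho(A, I_P)$ by the depths of $\cQ$, and then upper bound $|\OD(P, \cQ)|$ by a careful induction on $\cQ$.

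For the first stage, the algorithm is \emph{oblivious}---median-selection, bridge-finding~\cite{kirkpatrick1986ultimate}, and the partitioning of $S$ each run in $O(|S_v|)$ time independently of the input order---so by \cref{obs:quadrangle_tree}, $\rho(A, I_P) = O(n) + O\bigl(\sum_{v \in \cQ} |S_v|\bigr)$. Re-indexing by population yields $\sum_{v \in \cQ} |S_v| = \sum_u n_u (d(u) + 1)$, where $d(u)$ is the depth of $u$ in $\cQ$. The median step ensures $|S_{u'}| \leq |S_v|/2 + O(1)$ for every child $u'$ of $v$, giving $d(u) \leq \log(n/|S_u|) + O(1)$, and therefore $\sum_u n_u d(u) \leq n \log n - \sum_u n_u \log|S_u| + O(n)$.

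For the second stage, the crux is the inequality $|\OD(P, \cQ)| \leq 2^n \prod_u |S_u|^{n_u}$, which I would prove by induction on $\cQ$ via the stronger claim that for every $u \in \cQ$, the number of ordered downdrafts of $(S_u, \cQ_u)$---where $\cQ_u$ is the subtree of $\cQ$ rooted at $u$---is at most $2^{|S_u|} \prod_{u' \in \mathrm{subtree}(u)} |S_{u'}|^{n_{u'}}$. Since $\varphi(p) \succ_\cQ p$ forces $\varphi(p) \in S_{r(p)}$, any ordered downdraft of $S_u$ restricts to ordered downdrafts on the child sets $S_{u_1}, S_{u_2}$; fibers decouple across the two subtrees because elements of $S_{u_1}$ and $S_{u_2}$ are incomparable under $\prec_\cQ$. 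Each ordered downdraft on $S_u$ therefore factors into two sub-ordered-downdrafts plus local data at $u$: an assignment $\varphi|_{P_u \setminus \ch(P)}$ together with a position for each such point in the extended fiber of its target. Fixing the sub-downdrafts and any insertion order of the $n_u' := |P_u \setminus \ch(P)|$ local non-hull points, the number of (target, position) choices at the $i$-th insertion is at most
\[
\sum_{q \in S_u,\; p_i \prec_\cQ q} \bigl(a_q + \gamma_q^{\mathrm{old}} + 1\bigr) \leq (|S_u| - n_u) + (n_u - 1) + |S_u| \leq 2|S_u|,
\]
using $\sum_q a_q \leq |S_u| - n_u$ (total sub-fiber size) and $\sum_q \gamma_q^{\mathrm{old}} \leq n_u - 1$ (prior insertions). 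Summing this product over all $n_u'!$ insertion orders over-counts each final extension exactly $n_u'!$ times, giving a local extension factor of at most $(2|S_u|)^{n_u'}$; the identity $n_u + |S_{u_1}| + |S_{u_2}| = |S_u|$ then collects the factors of $2$ to close the induction.

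Combining both stages, $\log|\OD(P, \cQ)| \leq n + \sum_u n_u \log|S_u|$, and hence $\rho(A, I_P) \leq O(n) + \log\bigl(n^n / |\OD(P, \cQ)|\bigr)$, as desired. The main technical obstacle is the inductive bookkeeping: checking that restrictions to $S_{u_i}$ really give valid ordered sub-downdrafts, that cross-subtree mappings are genuinely impossible, and that the $2|S_u|$ option bound holds uniformly across all degenerate cases---nodes with only one child, or with varying numbers of hull vertices in their populations.
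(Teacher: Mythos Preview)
Your proposal is correct and rests on the same key inequality as the paper: the factorisation
\[
|\OD(S,\cQ(S))| \le (\text{const}\cdot|S|)^{|S\setminus(S_1\cup S_2)|}\cdot|\OD(S_1,\cQ(S_1))|\cdot|\OD(S_2,\cQ(S_2))|,
\]
which is exactly the paper's \cref{lem:downdraft_split}. The difference is purely organisational. The paper couples the time recursion $T(S)\le T(S_1)+T(S_2)+c|S|$ with the factorisation above into a single induction on the quantity $T(S)-c\bigl(|S|+|S|\log|S|-\log|\OD(S,\cQ(S))|\bigr)$ (\cref{lem:mcqueen_recursion}); the median step $|S_i|\le |S|/2$ is invoked inside that induction. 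You instead unroll the two recursions separately: first expand the time recursion into $\sum_u n_u\,d(u)$ and bound depths via the halving, then independently unroll the $|\OD|$ factorisation into $\prod_u|S_u|^{n_u}$. Both routes are valid; the paper's coupled induction is a little shorter because it never names depths or the product $\prod_u|S_u|^{n_u}$ explicitly.

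Two small remarks on your Stage~2. First, the averaging over all $n_u'!$ insertion orders is unnecessary: for any \emph{fixed} insertion order each extension corresponds to a unique insertion sequence, so the bound $(2|S_u|)^{n_u'}$ follows directly---this is essentially how the paper argues \cref{lem:downdraft_split}. Second, the identity you use to collect the factors of~$2$ should read $n_u' + |S_{u_1}| + |S_{u_2}| = |S_u|$ (with $n_u'$, the non-hull population count, rather than $n_u$); since $n_u'\le n_u$ this only strengthens your bound, so the induction still closes.
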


In Section~\ref{sec:final}, we show that Algorithm~\ref{algo:mcqueen_kirk_seidel} is universally optimal by combining the upper bound from Theorem~\ref{thm:upper_bound} with the lower bound from Theorem~\ref{theo:algo_lower_bound}.

\subparagraph{The proof.}
Every node in $\cQ$ corresponds to a call $A(S, (p_\ell, p_r))$ with $p_\ell \ne p_r$.
The quadrangle at this node is $p_\ell p_r p_j p_i$, and its child nodes correspond to the calls $A(S_1, (p_\ell, p_i))$ and $A(S_2, (p_j, p_r))$. We denote the subtree of this node by $\cQ(S)$. We put $\cQ = \cQ(P)$.

\begin{lemma} \label{lem:downdraft_split}
Let $S, S_1, S_2$ be the sets in some recursive call of \cref{algo:mcqueen_kirk_seidel}.
Then
\[
    |\OD(S, \cQ(S))| \le |S|^{|S \setminus (S_1 \cup S_2)|}  \cdot |\OD(S_1, \cQ(S_1))| \cdot |\OD(S_2, \cQ(S_2))|.
\]
\end{lemma}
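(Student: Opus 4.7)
My plan is to construct an injection $\Psi : \OD(S, \cQ(S)) \hookrightarrow \OD(S_1, \cQ(S_1)) \times \OD(S_2, \cQ(S_2)) \times S^{S^*}$, from which the claimed bound follows since $|S^{S^*}| = |S|^{|S^*|}$. The first two components of $\Psi(\overline{\varphi})$ will be the natural restrictions $\overline{\varphi_i}$ for $i \in \{1,2\}$, and the third component will be a map $\sigma : S^* \to S$ encoding the positions of the $S^*$ points inside the combined ordered downdraft $\overline{\varphi}$.

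I would first verify that the restriction map is well-defined. Fix $\overline{\varphi} \in \OD(S, \cQ(S))$ and $q \in S_i \setminus \ch(S_i)$. The node $r(q)$ lies in the subtree $\cQ(S_i)$, and the downdraft condition $q \prec_{\cQ(S)} \varphi(q)$ forces $r(\varphi(q))$ to be a (non-strict) descendant of $r(q)$, so $r(\varphi(q)) \in \cQ(S_i)$ and thus $\varphi(q) \in S_i$. Moreover, $\prec_{\cQ(S)}$ and $\prec_{\cQ(S_i)}$ agree on pairs of points from $S_i$, so $\overline{\varphi_i} := (\varphi|_{S_i \setminus \ch(S_i)},\ \text{induced fiber orderings})$ is a valid element of $\OD(S_i, \cQ(S_i))$.

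The main obstacle is defining $\sigma$ so that the resulting map $\Psi(\overline{\varphi}) := (\overline{\varphi_1}, \overline{\varphi_2}, \sigma)$ is injective. My plan is to set $\sigma(q)$ to be the immediate predecessor of $q$ in the fiber $\varphi^{-1}(\varphi(q))$, with the convention $\sigma(q) := \varphi(q)$ when $q$ is the first element of its fiber. I would then prove injectivity by reconstructing $\overline{\varphi}$ from $(\overline{\varphi_1}, \overline{\varphi_2}, \sigma)$ by processing the elements of $S^*$ in the canonical total order $\prec_{\cQ(S)}$ (it is a total order on $S^*$ because all such points share the root of $\cQ(S)$ as $r(\cdot)$ and are then ordered by depth in the halfspace $H_{r(q)}$). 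When $\sigma(q) \in \ch(S)$ the interpretation is forced, namely $\sigma(q) = \varphi(q)$ and $q$ is first in its fiber. When $\sigma(q) \in S \setminus \ch(S)$ there are a priori two readings -- ``$\sigma(q) = \varphi(q)$ with $q$ first in its fiber'' or ``$\sigma(q)$ is the predecessor of $q$ in its fiber'' -- and one must disambiguate using the already-reconstructed structure (from $\overline{\varphi_i}$ and earlier $S^*$ elements) together with the constraint $q \prec_{\cQ(S)} \varphi(q)$. Carefully checking that this resolution is always consistent and exhaustive is the technical heart of the argument, after which the lemma follows immediately.
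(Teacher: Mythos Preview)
Your plan to build an injection is natural, and the first step—that the restrictions $\overline{\varphi_i}$ are well-defined ordered downdrafts on $(S_i,\cQ(S_i))$—is correct. The gap is in the encoding $\sigma$: the predecessor-or-target convention does \emph{not} make $\Psi$ injective, and the disambiguation you defer to ``the technical heart'' cannot in fact be carried out.

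Here is a concrete failure. Take $q \in S^*$ and $a,b \in S_1 \setminus \ch(S)$ with $a \prec_{\cQ(S_1)} b$, and let $\overline{\varphi_1}$ be any ordered downdraft on $S_1$ with $\varphi_1(a)=b$. Consider two ordered downdrafts on $S$ that both restrict to $\overline{\varphi_1}$ (and to the same $\overline{\varphi_2}$):
\begin{itemize}
    \item $\overline{\varphi}$ with $\varphi(q)=a$ and $q$ the first (and only) element of that fiber, so $\sigma(q)=\varphi(q)=a$;
    \item $\overline{\varphi}'$ with $\varphi'(q)=b$ and the fiber of $b$ ordered $(a,q)$, so the predecessor of $q$ is $a$ and again $\sigma(q)=a$.
\end{itemize}
Both are valid, since $q$ lies at the root node and hence $q \prec_{\cQ(S)} a$ and $q \prec_{\cQ(S)} b$. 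Both produce the same triple $(\overline{\varphi_1},\overline{\varphi_2},\sigma)$. With $|S^*|=1$ there are no ``earlier $S^*$ elements'' to consult, and the constraint $q \prec_{\cQ(S)} \varphi(q)$ is satisfied under either reading. Thus $\Psi$ is not injective, and no amount of ``careful checking'' will fix this particular encoding: a single element of $S$ is simply not enough to record both the target fiber and the position of $q$ within it.

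The paper's argument has the same overall shape—restrict to $S_1\cup S_2$, then account for the $S^*$ points at a cost of at most $|S|$ each—but it runs the count in the opposite direction: it argues that every ordered downdraft on $S$ is \emph{obtained} from one on $S_1\cup S_2$ by inserting the points of $S^*$ into fibers, and bounds the number of such insertions per point. If you want to keep the injection viewpoint, you need either a larger codomain for $\sigma$ (accepting a weaker constant that is still harmless for the downstream recursion) or a genuinely different encoding; the current one does not suffice.
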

\begin{proof}
    First, we show the following upper bound:
    \[
        |\OD(S, \cQ(S))| \le |S|^{|S \setminus (S_1 \cup S_2)|}  \cdot |\OD(S_1 \cup S_2, \cQ(S))|.
    \]
    Indeed, any ordered downdraft in $\OD(S, \cQ(S))$ can be obtained from an ordered downdraft in $\OD(S_1 \cup S_2, \cQ(S))$
    by placing each element of $S \setminus (S_1 \cup S_2)$ in some fiber, and by placing it at some point in the ordering of the fiber.
    Within a fiber, an element of $S$ can be placed in at most $|S|$ different ways.
    
    Finally, note that all points in $S_1 - CH(P)$ lie in the left child of $\cQ(S)$ and all points in $S_2 - CH(P)$ lie in its right child.
    Thus, any downdraft over $(S_1 \cup S_2, \cQ(S))$ may be obtained by constructing downdrafts over $(S_1, \cQ(S_1))$ and $(S_2, \cQ(S_2))$ independently, and thus
    
    \[
        |\OD(S_1 \cup S_2, \cQ(S))| = |\OD(S_1, \cQ(S_1))| \cdot |\OD(S_2, \cQ(S_2))|. \qedhere
    \]
\end{proof}

\noindent
For any call $A(S, (p_\ell, p_r))$ made during the execution of the Kirkpatrick-McQueen-Seidel algorithm,
let $T(S)$ denote total running time spent on this call (including recursion). Since both median finding and bridge finding take $O(|S|)$ time, there is an absolute constant $c$ such that:
\[
T(S) \le T(S_1) + T(S_2) + c \cdot |S|. \hspace{2cm} \textnormal{For this constant $c$, we show: }
\]

\begin{lemma} \label{lem:mcqueen_recursion}
Denote for any call $A(S, (p_\ell, p_r))$ by $T(S)$ the time spent on executing the call (including all time spent during recursion). Then
\[
    T(S) \le c \cdot \Big(|S| + |S| \log |S| - \log |\OD(S, \cQ(S))|\Big).
\]
\end{lemma}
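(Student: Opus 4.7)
The plan is to prove the lemma by strong induction on $|S|$. The base case $|S|=1$ (corresponding to the terminating call with $p_\ell = p_r$) is immediate: the algorithm performs $O(1)$ work and $|\OD(S, \cQ(S))| = 1$, so the stated bound holds once $c$ is chosen at least this $O(1)$ constant.

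For the inductive step with $|S| \ge 2$, I would combine three ingredients: (a) the per-call recurrence $T(S) \le T(S_1) + T(S_2) + c_0|S|$, where $c_0$ is the constant hidden in the $O(|S|)$-time median and bridge finding; (b) the inductive hypothesis applied to $T(S_1)$ and $T(S_2)$; and (c) Lemma~\ref{lem:downdraft_split} rearranged as $-\log|\OD(S_1, \cQ(S_1))| - \log|\OD(S_2, \cQ(S_2))| \le |S^*|\log|S| - \log|\OD(S, \cQ(S))|$. After substitution, the claim reduces to the purely algebraic inequality
\begin{align*}
(|S_1| + |S_2|) + |S_1|\log|S_1| + |S_2|\log|S_2| + |S^*|\log|S| \le |S|\log|S|,
\end{align*}
together with an absorption of the extra $c_0|S|$ into the $c|S|$ term of the target bound, which closes once $c \ge c_0$.

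The inequality itself will follow from a single geometric observation about the KMS partition: $S_1 \subseteq \{p \in S : x(p) \le x(p_i)\}$ and $S_2 \subseteq \{p \in S : x(p) \ge x(p_j)\}$, where the bridge-finding step returns the edge $(p_i, p_j)$ of $\CH(S)$ satisfying $x(p_i) \le x(m) \le x(p_j)$ for the median-$x$ point $m$. This immediately yields (i) $|S_1|, |S_2| \le |S|/2$ by the defining property of the median, and (ii) $S_1 \cap S_2 = \emptyset$ (their $x$-projections are disjoint), so $|S_1| + |S_2| + |S^*| = |S|$ exactly. Given (i) and (ii), the elementary estimate $|S_i|\log|S_i| \le |S_i|(\log|S| - 1)$ makes the left-hand side collapse to $(|S_1| + |S_2|)\log|S| + |S^*|\log|S| = |S|\log|S|$, as required.

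The main obstacle I expect is establishing the projection claim $S_1 \subseteq \{p \in S : x(p) \le x(p_i)\}$, which relies on concavity of the upper hull: the slope of the chord $p_\ell p_i$ strictly exceeds that of the bridge $p_i p_j$, so the extension of $p_\ell p_i$ beyond $p_i$ rises above the upper hull of $S$; since every point of $S$ lies on or below the upper hull, no such point with $x(p) > x(p_i)$ can lie above line $p_\ell p_i$, i.e.\ none lies in $S_1$. A minor secondary issue is the $\lceil \cdot \rceil$ rounding for odd $|S|$ or the degenerate case where $m$ happens to be a hull vertex of $S$; these contribute only an additive constant per recursion step and are absorbed by a slight enlargement of $c$.
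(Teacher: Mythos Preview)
Your proposal is correct and follows essentially the same route as the paper: induction on $|S|$, invoke the recurrence $T(S)\le T(S_1)+T(S_2)+c|S|$, apply the inductive hypothesis, use Lemma~\ref{lem:downdraft_split} to trade $-\log|\OD(S_1,\cQ(S_1))|-\log|\OD(S_2,\cQ(S_2))|$ for $|S^*|\log|S|-\log|\OD(S,\cQ(S))|$, and close with $n_1,n_2\le n/2$. The paper packages the algebra as $n_i+n_i\log n_i=n_i\log(2n_i)\le n_i\log n$, which is exactly your $|S_i|\log|S_i|\le |S_i|(\log|S|-1)$; your extra justification of $S_1\cap S_2=\emptyset$ and of $|S_i|\le|S|/2$ via the $x$-projection/upper-hull-concavity argument is more than the paper spells out (it simply asserts both), but the overall argument is the same.
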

\begin{proof}
    We use induction over $|S|$. If $p_\ell = p_r$, then $|S| = 1$ and $T(S) \le c$.
    Otherwise, let $n = |S|, n_1 = |S_1|, n_2 = |S_2|$. By induction,
    \begin{align*}
        T(S) &\le T(S_1) + T(S_2) + c \cdot |S|\\
        &\le c \cdot \Big(n_1 + n_2 + n_1 \log n_1 + n_2 \log n_2 - \log |\OD(S_1, \cQ(S_1))| - \log |\OD(S_2, \cQ(S_2))| + n\Big)\\
        &\le c \cdot \Big(n_1 \log (2 n_1) + n_2 \log (2 n_2) - \log |\OD(S_1, \cQ(S_1))| - \log |\OD(S_2, \cQ(S_2))| + n\Big)
    \end{align*}
    Since, $S_1, S_2$ are computed via the median x-coordinate, $n_1, n_2 \le \frac{n}{2}$.
    By \cref{lem:downdraft_split},
    \[
        - \log |\OD(S_1, \cQ(S_1))| - \log |\OD(S_2, \cQ(S_2))| \le - \log |\OD(S, \cQ(S))| + (n - n_1 - n_2) \log n
    \]
    Therefore,
    \begin{align*}
        T(S) &\le c\cdot \Big(n_1 \log n + n_2 \log n - \log |\OD(S, \cQ(S))| + (n - n_1 - n_2) \log n) + n\Big)\\
        &= c \cdot \Big(n + n \log n - \log |\OD(S, \cQ(S))|)\Big) \qedhere
    \end{align*}
\end{proof}

\noindent
We now apply Lemma~\ref{lem:mcqueen_recursion} to the first call $A(P, (p_\ell, p_r))$ of Algorithm~\ref{algo:mcqueen_kirk_seidel}. 
Note that per definition, $\cQ(P) = \cQ$ and $|P| = n$. 
Lemma~\ref{lem:mcqueen_recursion} guarantees that, regardless of the order $I_P$ we receive the input $P$ in, the running time is at most 
\[
    c \cdot \Big(n + n \log n - \log |\OD(P, \cQ)|)\Big) = c \cdot \Big(n + \log \frac{n^n}{|\OD(P, \cQ)|} \Big),
\]

\noindent
which proves Theorem~\ref{thm:upper_bound}.

\newpage
\section{Showing Universal Optimality}
\label{sec:final}

In Section~\ref{sec:lower_bound}, we showed for each point set $P$ a universal lower bound (Theorem~\ref{theo:algo_lower_bound}).
This bound uses the quantity $V_{\max}(P)$.
This quantity is the maximum $m$, for which there exists a list  $W = (a_i, b_i, c_i)_{i=1}^{n} \subset \mathbb{Z}^{n \times 3}$ with $m$ inputs $I_P \in \mathbb{I}_P$ for which $W$ is a witness list of $I_P$. 

In Section~\ref{sec:upper_bound}, we showed an upper bound for Algorithm~\ref{algo:mcqueen_kirk_seidel} (Theorem~\ref{thm:upper_bound}).
For each input $I_P$, the execution of Algorithm~\ref{algo:mcqueen_kirk_seidel} on $I_P$ (together with an edge between the leftmost and rightmost vertex in $I_P$) defines a \emph{quadrangle tree} $\cQ$.
Our upper bound uses the quantity $|\OD(P, \cQ)|$.
This quantity counts the maximum number of ordered downdrafts of $(P, \cQ)$.

Here, we show that Algorithm~\ref{algo:mcqueen_kirk_seidel} is universally optimal by correlating these two quantities.
We achieve this through two intermediate quantities: the number of $\emph{corner assignments}$ of $W$ and the number of \emph{hull lists} of $P$.

\begin{definition}
    Let $W = (a_i, b_i, c_i)_{i=1}^{n}$.
    A \emph{corner assignment} picks a corner for each of the $n$ triangles in $W$. 
    Formally, it is a map $C : [n] \to [n]$ such that
    $C(i) \in \{a_i, b_i, c_i\}$ for all $i \in [n]$.
    Let $\CA(W)$ denote the set of all corner assignments of $W$. Note that $|\CA(W)| = 3^n$.
\end{definition}

\begin{definition}
    We define $\HL(P)$ as all lists of $k = |ch(P)|$ integers such that  $\forall H \in \HL(P)$ there exists an input $I_P \in \mathbb{I}_P$ where $H$ is a hull list of $I_P$. Note that $|\HL(P)| = \frac{n!}{(n-k)!} \le n^k$.
\end{definition}


\begin{lemma}
    Let $W$ be an arbitrary witness for $P$.
    Let $\cQ$ be an arbitrary quadrangle tree for $P$.
    There exists an injective map
    \[
        \Phi : V(P, W) \hookrightarrow \OD(P, \cQ) \times \CA(W) \times \HL(P).
    \]
\end{lemma}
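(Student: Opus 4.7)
My plan is to define $\Phi(I_P) = (\overline{\varphi}, C, H)$ explicitly and then verify injectivity by reconstructing $I_P$. Let $H$ be the canonical hull list of $I_P$, whose positions $h_1, \dots, h_k$ encode $\ch(P)$ cyclically starting at the leftmost point. For each index $i \in [n]$ with $I_P[i]$ internal, let $C(i)$ be the smallest index in $\{a_i, b_i, c_i\}$ such that $I_P[C(i)] \succ_\cQ I_P[i]$, and for hull $I_P[i]$ set $C(i) = -1$. Define $\varphi(I_P[i]) = I_P[C(i)]$ for every internal $I_P[i]$, and order each fiber $\varphi^{-1}(\{q\})$ by the position of its elements in $I_P$. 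This yields a triple $(\overline{\varphi}, C, H) \in \OD(P, \cQ) \times \CA(W) \times \HL(P)$.

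For $C$ to be well-defined, I need the \emph{supporting-corner property}: for every internal $p = I_P[i]$, at least one vertex of its witness triangle is $\succ_\cQ p$. Write $N = r(p)$ and let $d_N$ denote the signed perpendicular distance from the root edge $e_N$ on $p$'s side. Since $p$ lies strictly in the triangle's interior, at least one vertex $v$ must satisfy $d_N(v) > d_N(p) > 0$. The key geometric fact is that the root edges along the path from the root of $\cQ$ down to $N$ nest progressively tighter against the hull, so a point of $P$ lying above $e_N$ automatically lies above every ancestor root edge and therefore in $S_N$. Hence $r(v)$ is $N$ or a strict descendant of $N$, and $v \succ_\cQ p$ in either case. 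Hull corners are treated as maximal elements of $\prec_\cQ$ so that witness triangles with a hull vertex are handled uniformly.

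For injectivity, I reconstruct $I_P$ from $(\overline{\varphi}, C, H)$. The list $H$ fixes $I_P$ on all hull positions. I then propagate: for each position $j$ whose value $q = I_P[j]$ has been recovered, look at $C^{-1}(j) = \{i : C(i) = j\}$, which is in bijection with the fiber $\varphi^{-1}(\{q\})$. Writing $C^{-1}(j) = \{i_1 < \dots < i_m\}$ and listing the fiber as $p_1 \prec_{\overline{\varphi}} \dots \prec_{\overline{\varphi}} p_m$, the construction gives $I_P[i_a] = p_a$. Iteration reaches every position because $i \mapsto C(i)$ strictly ascends in $\prec_\cQ$ under $\varphi$, and $\prec_\cQ$ being a partial order forces each such chain to terminate at a hull position (where $C = -1$). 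Thus distinct inputs yield distinct outputs and $\Phi$ is injective.

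The main obstacle is the supporting-corner property: the sketch above needs a careful verification that root edges along any downward path in $\cQ$ really do nest against the upper-hull boundary, which is where the convexity of $\ch(P)$ and the fact that bridge points lie on $\ch(P)$ enter. Once this geometric nesting is in hand, the rest of the argument is mechanical.
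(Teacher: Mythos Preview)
Your construction of $\Phi$ and your injectivity argument by explicit reconstruction are essentially the paper's approach; the paper phrases injectivity as a contradiction via a commuting-diagram chase, but the content is the same matching of $C^{-1}(j)$ with the ordered fiber $\varphi^{-1}(I_P[j])$, and your supporting-corner sketch (a vertex $v$ with $d_N(v)>d_N(p)$ lies in the cap at $N$, hence $r(v)$ is $N$ or a descendant) is exactly the geometry behind the paper's one-line ``Therefore''. One small correction: you do not need to treat hull vertices as $\prec_\cQ$-maximal (they generally are not); your nesting argument already handles hull vertices uniformly, since any $v\in P$ with $d_N(v)>0$ lies in the polygon at $N$ regardless of whether $v\in\ch(P)$.
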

\begin{proof}
    Fix $I_P \in V(P, W)$. By our canonical ordering of the hull lists, there exists exactly one $H \in HL(P)$ for which $H$ is a hull list of $I_P$. 
    We define our map $\Phi$ by additionally constructing for $I_P$ a corner assignment $C$ and ordered downdraft $\overline{\varphi}$.

    For $i \in [n]$, observe that if $I_p[i]$ is not in $\ch(P)$ then $I_p[i]$ lies strictly inside the triangle formed by 
    $(I_P[a_i], I_P[b_i], I_P[c_i])$. Therefore, at least one of
    $I_P[i] \prec_\cQ I_P[a_i], I_P[i] \prec_\cQ I_P[b_i]$ or $I_P[i] \prec_\cQ I_P[c_i]$ holds.
    We construct the corner assignment $C$ by arbitrarily choosing for each $i \in [n]$, the value $C(i) \in \{a_i, b_i, c_i\}$ such that $I_P[i] \prec_\cQ I_P[C(i)]$. 
    We then construct a downdraft $\varphi$ by setting $\varphi(I_P[i]) = I_P[C(i)]$.
   To obtain an ordered downdraft, we order for all $q \in P$ each fiber $\varphi^{-1}( \{ q \})$ via the following rule:
    \[
        \forall I_P[i], I_P[j] \in \varphi^{-1}( \{ q \} ): \hspace{1cm}I_P[i] \prec_{\overline{\varphi}} I_P[j] \quad \Leftrightarrow \quad i < j
    \]
    Then $\overline{\varphi}$ is an ordered downdraft and we have defined the map $\Phi(I_P) = (\overline{\varphi}, C, H)$.

    We now show that this map $\Phi$ is injective.  
    Suppose that  $I_P$ and $I_P'$ are elements of $V(P, W)$ such that $\Phi(I_P) = (\overline{\varphi}, C, H) = \Phi(I_P')$.
        Observe that, per definition, the hull list $H \in \Phi(I_P) = \Phi(I_P')$ is a sequence of integers with the following property:
    if $u$ is the $x$'th point on $\ch(P)$ then $I_P[H[x]] = I_P'[H[x]] = u$.
    Suppose for the sake of contradiction that there exist integers $i, i' \in [n]$ with $i \ne i'$ such that $u = I_P[i] = I_P'[i']$.
    We pick the pair $(i, i')$ such that $u$ is maximal in the partial order $\prec_\cQ$.

    First, consider the case where $u$ is a point on $\ch(P)$. Let $u$ be the $x$'th element in the canonical ordering of $ch(P)$. Then $u = I_P[H[x]] = I_P'[H[x]]$.
    However, we chose $u$ such that $u = I_P[i] = I_P'[i']$ and so $i = i'$, a contradiction.
    
    Alternatively, suppose that $u \notin \ch(P)$.
    By maximality of our choice of $u$, it follows that for all points $v'$ with $u \prec_\cQ v'$, there exists a unique integer $j'$ such that $v' = I_P[j'] = I_P'[j']$.
    We define $v := \varphi(u)$. 
    By the definition of a downdraft, $u \prec_\cQ v$ and so there is a unique integer $j \in [n]$
    with $v = \varphi(u) = I_P[j] = I_P'[j]$. 

\newpage
    We observe that the objects $I_P, I_P', \varphi$ and $C$ are all maps.
    In Figure~\ref{figure:commuting_diagram} we illustrate the diagram corresponding to these maps. 
    Recall that $H$ is the hull list of both $\Phi(I_P)$ and $\Phi(I_P')$.
    The left diagram in Figure~\ref{figure:commuting_diagram} commutes.
    Indeed, by our definition of the downdraft $\varphi$, we chose $\varphi$ such that $\varphi(I_p[x]) = I_P[C(x)]$ for all $x \in [n] \setminus H$.
    
    Consider chasing the left square of this diagram from $v = \varphi(u)$ (see Figure~\ref{figure:commuting_diagram}, right).
    From $v$, we may first apply $I_P^{-1}$ to obtain $j$.
    We then apply $C^{-1}$ to $j$ to get a set of indices $X := C^{-1}(I_P^{-1}(\{v\}))$.
    Conversely, we may chase this diagram from $v$ by first applying $\varphi^{-1}$ to obtain a set of points, and then applying $I_P^{-1}$ to obtain a set of indices. 
    Since the diagram commutes, also $I_P^{-1} (\varphi^{-1}(\{v\})) = X$.
    Next, we focus on the right square of the diagram. Since $v = I_P'[j]$, we have $C^{-1}(I_P'^{-1}(\{v\})) = X$.
    We may chase the right square from $v$ by first applying $\varphi^{-1}$ to obtain a set of points, and then applying $I_P'^{-1}$.
    Since this diagram commutes, it follows that $I_P'^{-1}(\varphi^{-1}(\{v\})) = X$ too. We now use this set $X$ to obtain a contradiction:

\begin{figure}
\begin{tikzcd}
{[n] - H} \arrow[r, "I_P"] \arrow[d, "C"] & P - \ch(P) \arrow[d, "\varphi"] & {[n] - H} \arrow[d, "C"] \arrow[l, "I_P'"'] &  & X \arrow[r, "I_p"] \arrow[d, "C"] & \varphi^{-1}(\{v\}) \arrow[d, "\varphi"] & X \arrow[d, "C"] \arrow[l, "I_P'"'] \\
{[n]} \arrow[r, "I_P"]                    & P                               & {[n]} \arrow[l, "I_P'"']                    &  & \{j\} \arrow[r, "I_P"]            & \{v\}                                    & \{j\} \arrow[l, "I_P'"']           
\end{tikzcd}
\caption{(left) the commuting diagram corresponding to $(I_P, I_P', \varphi, C)$. (right) In our proof, we chase this diagram from $v$ encountering the integer $j$, the set $X$, and the fiber $\varphi^{-1}(\{ v \})$. }
\label{figure:commuting_diagram}
\end{figure}

    We assumed that $(\overline{\varphi}, C, H) = \Phi(I_P) = \Phi(I_P')$.
    Recall that $\prec_{\overline{\varphi}}$ induces a total order on $\varphi^{-1}(\{v\})$.
    Per definition of $\Phi(I_P)$, all points $p \in \varphi^{-1}(\{v\})$ are ordered by their index $x$ defined via $p = I_P[x]$. 
    The rank of a value $u$ with respect to an ordered set is its index in the order. 
    We can consider the rank of $u$ in two sets: $X$ and $\varphi^{-1}(\{v \})$.
    Recall that we ordered for all $q \in P$ each fiber $\varphi^{-1}( \{ q \})$ via the following rule:
    \[
        \forall \, I_P[a], I_P[b] \in \varphi^{-1}( \{ q \} ): \hspace{1cm}I_P[a] \prec_{\overline{\varphi}} I_P[b] \quad \Leftrightarrow \quad a < b
    \]
    
    \noindent
    It follows that the rank of $u$ satisfies:
    \[
        \Rank_{\prec_{\overline{\varphi}}}\Big(u, \varphi^{-1}(\{v\})\Big) = \Rank_{<}\Big(i, I_P^{-1}(\varphi^{-1}(\{v\}))\Big) = \Rank_{<}\Big(i, X \Big)
    \]

    \noindent
    Similarly,
    \[
       \Rank_{\prec_{\overline{\varphi}}}\Big(u, \varphi^{-1}(\{\varphi(u)\})\Big) = \Rank_{<}\Big(i', C^{-1}(\{j\})\Big) = \Rank_{<}\Big(i', X \Big).
    \]

    Since $i$ and $i'$ both have the same rank in $X$, it follows that $i = i'$, contradiction. This shows that $I_P = I_P'$, so our constructed map $\Phi$ is injective.
\end{proof}

\noindent
The above lemma immediately implies the following relation between $V_{\max}(P)$ and $\OD(P, \cQ)$:

\begin{corollary} \label{cor:vmax_bound}
    For every quadrangle tree $\cQ$ for $P$,
    \[
        \Vmax \le |\OD(P, \cQ)| \cdot 3^n \cdot n^k. 
    \]
\end{corollary}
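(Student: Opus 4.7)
The plan is to obtain the corollary as a direct consequence of the injectivity established in the preceding lemma. Since $\Phi : V(P, W) \hookrightarrow \OD(P, \cQ) \times \CA(W) \times \HL(P)$ is injective, we immediately get the cardinality inequality
\[
    |V(P, W)| \le |\OD(P, \cQ)| \cdot |\CA(W)| \cdot |\HL(P)|
\]
for every choice of $W \in \bZ^{n \times 3}$ and every quadrangle tree $\cQ$ for $P$.

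Next, I would substitute the known sizes of the two easy factors. By construction, a corner assignment independently picks one of three corners for each of the $n$ triangles in $W$, giving $|\CA(W)| = 3^n$. A hull list is determined by choosing an injection from the $k = |\ch(P)|$ hull positions into the $n$ input indices (with the leftmost hull vertex fixed to the first position only up to choice of the remaining ordering), so $|\HL(P)| = \frac{n!}{(n-k)!} \le n^k$. Plugging these into the inequality yields
\[
    |V(P, W)| \le |\OD(P, \cQ)| \cdot 3^n \cdot n^k.
\]

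Finally, take the maximum over all $W \in \bZ^{n \times 3}$ on the left-hand side. The right-hand side does not depend on $W$, so the bound is preserved and we obtain
\[
    \Vmax(P) = \max_W |V(P, W)| \le |\OD(P, \cQ)| \cdot 3^n \cdot n^k,
\]
which is the claimed inequality. There is no real obstacle here, since all the work has already been done in the injectivity lemma and in the bookkeeping for $|\CA(W)|$ and $|\HL(P)|$; the corollary is essentially a two-line rewrite. The only subtlety worth mentioning is that the right-hand side is independent of $W$, which is exactly what allows the maximization step to go through cleanly.
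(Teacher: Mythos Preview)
Your proposal is correct and matches the paper's approach exactly: the paper states that the corollary is an immediate consequence of the injectivity lemma, and you have simply spelled out the two-line argument (injectivity gives the cardinality bound, substitute $|\CA(W)| = 3^n$ and $|\HL(P)| \le n^k$, then maximize over $W$).
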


\noindent
Finally, we are ready to show universal optimality of Algorithm~\ref{algo:mcqueen_kirk_seidel}:

\main*

\begin{proof}
    Let $\cQ$ denote the quadrangle tree that corresponds to executing 
    Algorithm~\ref{algo:mcqueen_kirk_seidel}. 
    We apply Theorem~\ref{thm:upper_bound}  to note that there exists a constant $c$ such that:

    \[
    \max_{I_{P} \in \bI_{P}} \rho(A, I_{P}) \leq c \cdot \Big(n + \log\frac{n^n}{|\OD(P, \cQ)|}\Big). 
    \]
    
    By \cref{cor:vmax_bound}, 
    \[
        \log\frac{n^n}{|\OD(P, \cQ)|} \le \log\frac{3^n \cdot n^k \cdot n^n}{\Vmax} = n \log (3) + k \log n + \log\frac{n^n}{\Vmax}
    \]
    Since $\log(3) \le 2$ and $\log n^n \le 2 n + \log n!$, this yields
    \[
        \log\frac{n^n}{|\OD(P, \cQ)|} \le 4 n + k \log n + \log\frac{n!}{\Vmax}
    \]

    \noindent
    We may now apply the universal lower bound from \cref{theo:algo_lower_bound} to obtain the theorem.
\end{proof}

\bibliographystyle{plainurl}
\bibliography{references}

\end{document}